\definecolor{darkred}{rgb}{0.8,0.2,0.2}
\definecolor{darkblue}{rgb}{0.3,0.3,0.7}
\numberwithin{equation}{section}
\newtheorem{theorem}[equation]{Theorem}
\newtheorem{lemma}[equation]{Lemma}
\newtheorem{prop}[equation]{Proposition}
\theoremstyle{definition}
\newtheorem{open}[equation]{Open Question}
\theoremstyle{remark}
\newtheorem{remark}[equation]{Remark}
\newcommand{\ud}{\mathrm{d}}
\newcommand{\eqdefl}{\mathrel{\mathop:}=}
\newcommand{\N}{\mathbb{N}}
\newcommand{\R}{\mathbb{R}}
\newcommand{\E}{\mathbb{E}}
\newcommand{\prob}{\mathbb{P}}
\begin{document}
\title{Arbitrage without borrowing or short selling?}
\date{\ukvardate\today}

\author{Jani Lukkarinen\thanks{Department of Mathematics and Statistics, University of Helsinki, P.O.\ Box 68, FI-00014 Helsingin yliopisto, Finland.
E-mail: \href{mailto:jani.lukkarinen@helsinki.fi}{\nolinkurl{jani.lukkarinen@helsinki.fi}}} \and
Mikko S. Pakkanen\thanks{
Department of Mathematics, 
Imperial College London, 
South Kensington Campus,
London SW7 2AZ, UK and CREATES, Aarhus University, Denmark.
E-mail:\ 
\href{mailto:m.pakkanen@imperial.ac.uk}{\nolinkurl{m.pakkanen@imperial.ac.uk}}}
}

\maketitle

\begin{abstract}

We show that a trader, who starts with no initial wealth and is not allowed to borrow money or short sell assets, is theoretically able to attain positive wealth by continuous trading, provided that she has perfect foresight of future asset prices, given by a continuous semimartingale. Such an arbitrage strategy can be constructed as a  process of finite variation that satisfies a seemingly innocuous self-financing condition, formulated using a pathwise Riemann--Stieltjes integral. Our result exemplifies the potential intricacies of formulating economically meaningful self-financing conditions in continuous time, when one leaves the conventional arbitrage-free framework.

\vspace*{1em}

\noindent {\it Keywords:} Short selling, self-financing condition, arbitrage, Riemann--Stieltjes integral, stochastic integral, semimartingale 

\vspace*{1em}

\noindent {\it 2010 Mathematics Subject Classification:} 60H05, 90G10, 60G44

\vspace*{1em}

\noindent {\it JEL Classification:} C22, G11, G14

\end{abstract}

\section{Introduction}

Common sense suggest that arbitrage strategies --- in the sense of mathematical finance, involving no initial wealth --- should require short selling or an access to credit --- an obvious \emph{budget constraint}. Indeed, in the real world, and in discrete-time models as well, we can distinguish the first position in the risky asset prescribed by the strategy. If this position were not short, it would have to be funded by borrowed money. However, in the realm of continuous trading, there might not be any ``first position'', as the composition of the portfolio can vary rather freely as a function of time, so it is not a priori clear if arbitrage strategies without short selling or borrowing are impossible.

\emph{Self-financing conditions} are an important aspect of dynamic trading strategies. They should be seen as a means to enforce coherent accounting: All profits from trading must be credited to, and all trading costs debited from the money market account. In continuous time, self-financing conditions are formulated using stochastic integrals; see, e.g., Bj\"ork \cite[Sections 6.1 and 6.2]{Bjo2009}. In particular, for adapted strategies, \emph{It\^o integrals} can be used when the price process is a semimartingale. However, the choice of the integral is a rather delicate matter, as not all stochastic integrals lend themselves to economically meaningful self-financing conditions. (For example, the paper by Bj\"ork and Hult \cite{BH2005} documents some interpretability issues that arise from the use of \emph{Skorohod integrals} and \emph{Wick products} in self-financing conditions.) In any case, any sound self-financing condition should at the very least rule out arbitrage strategies without short selling or borrowing. After all, such trading strategies, which are able to generate wealth literally \emph{ex nihilo}, should definitely not be self-financing.

Besides It\^o integration, \emph{pathwise Riemann--Stieltjes integrals} (see, e.g., Riga \cite{Rig2016}, Salopek \cite{Sal1998}, or Sottinen and Valkeila \cite{SV2003}) have often been seen as a ``safe'' way to formulate reasonable self-financing conditions. The reasons are manifold: Like It\^o integrals, Riemann--Stieltjes integrals can, of course, be obtained transparently as limits of Riemann sums that reflect the natural self-financing condition for simple trading strategies. Also, a pathwise Riemann--Stieltjes integral coincides with the corresponding It\^o integral whenever the latter exists. Recall that a Riemann--Stieltjes integral is guaranteed to exist for example when the integrator is continuous and the integrand is of \emph{finite variation}. While it typically rules out the Markovian trading strategies that arise in dynamic hedging and utility maximisation, say, the finite variation assumption is economically justified as it amounts to keeping the trading volume of the strategy finite (which is an essential requirement under transaction costs); see, e.g., Longstaff \cite{Lon2001}.

However, it transpires that self-financing conditions based on pathwise Riemann--Stieltjes integrals alone do not necessarily prohibit pathological trading strategies (even of finite variation). We show in this note that, quite surprisingly, a Riemann--Stieltjes-based self-financing condition may in fact admit arbitrage strategies that require neither borrowing nor short selling if the trader has \emph{perfect foresight} of the future prices of the risky asset.\footnote{In many cases, it is actually sufficient to have perfect foresight only on an arbitrarily short time interval, as is pointed out in Remark \ref{rem:mainres}.} Our existence result for such strategies (Theorem \ref{thm:main}, below) is valid provided that the price process is a continuous semimartingale with an equivalent local martingale measure and non-degenerate quadratic variation. While the requirement of perfect foresight is admittedly unusual, a sound self-financing condition should nevertheless prevent even a perfectly informed trader from executing such an egregious arbitrage strategy. More importantly, from a mathematical perspective, this result illustrates how stochastic integrals, even when defined pathwise, may not always behave as financial intuition would suggest. We additionally show that these arbitrage strategies would in fact not be possible if also the price process were of finite variation (Proposition \ref{prop:finite}, below). This indicates that the phenomenon documented in this note is intricately linked with the fine properties and ``roughness'' of the price process. 


\section{Model and main results}

Let us consider a continuous-time market model with a risky asset and a risk-free money market account, where trading is possible up to a finite time horizon $T \in (0,\infty)$. The price of the risky asset follows a continuous, positive-valued semimartingale 	$S=(S_t)_{t \in [0,T]}$, defined on a complete probability space $(\Omega,\mathscr{F},\prob)$. For simplicity, the interest rate of the money market account is zero. Additionally, we denote by $(\mathscr{F}^S_t)_{t \in [0,T]}$ the natural filtration of the price process $S$, augmented the usual way to make it complete and right-continuous, and by $\langle S \rangle$ the quadratic variation process of $S$. Throughout the paper, we use the interpretation $\inf \varnothing = \infty$.

Consider a trader, whose trading strategy is described by two c\`agl\`ad (continuous from left with limits from right) processes $\psi = (\psi_t)_{t \in [0,T]}$ and $\phi = (\phi_t)_{t \in [0,T]}$ that keep track of her money market account balance and holdings in the risky asset, respectively. The mark-to-market value of her portfolio at time $t \in [0,T]$ can then be expressed as
\begin{equation}\label{eq:value}
V_t = \psi_t + \phi_t S_t\, .
\end{equation}
As per the discussion above, we are interested in a scenario where the trader attempts to follow an arbitrage strategy, so she starts with no initial wealth, which translates to the constraint $V_0=0$.

The trader is additionally subject to a self-fi\-nan\-cing condition. Let us assume provisionally that $\psi$ and $\phi$ are adapted to the filtration $(\mathscr{F}^S_t)_{t \in [0,T]}$. Then the self-financing condition is formulated the usual way \cite[Sections 6.1 and 6.2]{Bjo2009} by requiring that
\begin{equation}\label{eq:selfie}
V_t = V_0 + \int_0^t \phi_u \ud S_u =  \int_0^t \phi_u \ud S_u \quad \textrm{for any $t \in [0,T]$}\, ,
\end{equation}
where the integral with respect to $S$ is understood as an It\^o integral. Under the self-financing condition \eqref{eq:selfie}, the process $\psi$ becomes redundant as, by plugging \eqref{eq:selfie} into \eqref{eq:value}, we can solve for $\psi_t$, to wit,
\begin{equation}\label{eq:psi}
\psi_t = \int_0^t \phi_u \ud S_u - \phi_t S_t, \quad  t \in [0,T]\, .
\end{equation}

Now the key question is: Are there non-trivial processes $\phi$, with $\phi_t \geqslant 0$ for all $t \in [0,T]$, such that $\psi_t \geqslant 0$ for all $t \in [0,T]$? Using \eqref{eq:psi}, we can reformulate this as a question of existence of non-negative processes $\phi$ that satisfy the stochastic inequality
\begin{equation}\label{eq:inequality}
\int_0^t \phi_u \ud S_u \geqslant \phi_t S_t\quad \textrm{for all $t \in [0,T]$}\, .
\end{equation}
In the adapted case, we can answer the question in a straightforward manner if we assume that $S$ is arbitrage-free. Indeed, if there exists a probability measure $\mathbb{Q}$ on $(\Omega,\mathscr{F})$ such that $\mathbb{Q} \sim \prob$ (where ``$\sim$'' denotes mutual absolute continuity of measures, as usual) and that $S$ is a local $\mathbb{Q}$-martingale, then a suitable version of the \emph{fundamental theorem of asset pricing} (e.g., \cite[Corollary 1.2]{DS1994}) implies that there are no non-negative (adapted) processes $\phi$ that would satisfy \eqref{eq:inequality} and $\prob(V_t > 0)>0$ for some $t \in (0,T]$.

However, as discussed above, we shall not insist on adaptedness, so we consider processes $\phi$ that are not necessarily adapted to the filtration $(\mathscr{F}^S_t)_{t \in [0,T]}$. Then the stochastic integral with respect to $S$ that appears in \eqref{eq:selfie}, \eqref{eq:psi} and \eqref{eq:inequality} may not exist as an It\^o integral. But if we assume that $\phi$ is of finite variation, then the integral does exist as a Riemann--Stieltjes integral, see \cite[Theorems 1.2.3 and 1.2.13]{Str2011}, defined path-by-path for any $t \in [0,T]$ by
\begin{equation}\label{eq:RS-int}
\int_0^t \phi_u \ud S_u \eqdefl \lim_{n \rightarrow \infty} \sum_{i=1}^{k_n}\phi_{\tau^n_{i}} (S_{t \wedge \tau^n_i}-S_{t \wedge \tau^n_{i-1}}) \quad \textrm{$\prob$-a.s.}\, ,
\end{equation}
where $x \wedge y \eqdefl \min \{ x,y\}$ for all $x,y \in \R$ and $(\tau^n_i)_{i=0,n \geqslant 1}^{k_n}$ is a family of random times such that
\begin{equation*}
0 = \tau^n_0 \leqslant \tau^n_1 \leqslant \cdots \leqslant \tau^n_{k_n} = T, \quad \textrm{for any $n \geqslant 1$}\, ,
\end{equation*}
and $\lim_{n \rightarrow \infty} \sup_{1 \leqslant i \leqslant k_n} (\tau^n_i - \tau^n_{i-1})=0$. The definition \eqref{eq:RS-int} is independent of the choice of $(\tau^n_i)_{i=0,n \geqslant 1}^{k_n}$. Further, it ensures that the self-financing condition based on such integrals reduces to the usual self-financing condition when $\phi$ is \emph{simple}, that is, piecewise constant.

\begin{remark}
While trading strategies in mathematical finance literature are conventionally assumed to be adapted to the natural filtration of the price process, non-adapted strategies do appear in literature on insider trading; see, e.g., \cite{AIS1998,PK1996}. More recently, it has also been suggested that (imprecise) prior information of future price changes at very short time scales may be available to high-frequency traders and market makers \cite{Hir2016}.
\end{remark}

Our main result shows that, in this alternative framework, there are in fact non-trivial, non-negative processes $\phi$ that satisfy the inequality \eqref{eq:inequality}. The proof of this result is carried out in Section \ref{sec:proof}, below.

\begin{theorem}\label{thm:main}
Suppose that the positive continuous semimartingale $S=(S_t)_{t \in [0,T]}$ satisfies $\prob(\langle S \rangle_T > 0)>0$. Assume further that there exist a probability measure $\mathbb{Q} \sim \prob$ such that $S$ is a local $\mathbb{Q}$-martingale.
Then there exists a non-negative process $\phi=(\phi_t)_{t \in [0,T]}$, with c\`agl\`ad sample paths of finite variation, such that $\phi_0=0$, and
\begin{subequations}
\begin{align}
&\int_0^t \phi_u \ud S_u \geqslant \phi_t S_t \textrm{ for all $t \in [0,T]$ $\prob$-a.s.}\, , \label{eq:as}\\
&\prob\bigg( \int_0^t \phi_u \ud S_u > \phi_t S_t \textrm{ for all $t \in (\rho,T]$}\bigg)  > 0\, ,\label{eq:prob}
\end{align}
\end{subequations}
where $\rho \eqdefl \inf \{ t \in (0,T] : \langle S \rangle_t > 0\} \wedge T$.
\end{theorem}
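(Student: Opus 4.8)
The plan is to construct $\phi$ explicitly by integrating a nonnegative ``density'' against the clock of the quadratic variation, so that the Riemann--Stieltjes integral in \eqref{eq:as} can be computed via integration by parts and compared with $\phi_t S_t$ pointwise. Concretely, since $\phi$ is to be of finite variation and continuous, the Riemann--Stieltjes integral satisfies the integration-by-parts formula $\int_0^t \phi_u\,\ud S_u = \phi_t S_t - \phi_0 S_0 - \int_0^t S_u\,\ud \phi_u$, and with $\phi_0 = 0$ this reduces \eqref{eq:as} to the requirement $\int_0^t S_u\,\ud\phi_u \leqslant 0$ for all $t$, with strict inequality (for $t > \rho$) on the event in \eqref{eq:prob}. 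So it suffices to build a nonnegative, continuous, finite-variation process $\phi$ with $\phi_0 = 0$ whose (signed) Stieltjes measure $\ud\phi$ integrates $S$ to something nonpositive on every $[0,t]$. The natural device is to let $\phi$ first increase and then decrease: on a first phase $\phi$ ramps up while $S$ is ``small'', and on a later phase $\phi$ ramps down while $S$ is ``large'', so that the negative contribution $-\int S\,\ud\phi$ from the down-ramp dominates.

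The key mechanism exploiting ``perfect foresight'' and the roughness of $S$ is this: because $\prob(\langle S\rangle_T > 0) > 0$ and $S$ is continuous, on a set of positive probability the path of $S$ genuinely oscillates after time $\rho$; in particular one can find (path-dependent, hence non-adapted) random times at which $S$ takes a relatively low value and later random times at which it takes a strictly higher value. I would define a stopping-time-like skeleton $\rho \leqslant \sigma_1 < \sigma_2 \leqslant T$ such that $S_{\sigma_1} < S_{\sigma_2}$ on the relevant event, let $\phi$ increase linearly (in, say, the $\langle S\rangle$-clock or simply in $t$) from $0$ to some level $c>0$ over $[\sigma_1,\sigma_2]$ and then decrease back toward $0$, timing the decrease to happen while $S$ is large. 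A cleaner variant: take $\phi_t = f(\langle S\rangle_t)$ for a suitable deterministic, piecewise-linear, nonnegative $f$ with $f(0)=0$ that rises then falls; then $\ud\phi$ is supported where $\langle S\rangle$ increases, and $\int_0^t S_u\,\ud\phi_u = \int_0^{\langle S\rangle_t} S(\text{time-change})\, f'(r)\,\ud r$. Choosing $f$ so that its derivative is positive exactly on a range of quadratic-variation levels where $S$ is small and negative where $S$ is large yields the desired sign. One then has to argue that such a choice is possible on an event of positive probability; this is where $\langle S\rangle_T>0$ and the continuity of $S$ (ruling out $S$ being constant past $\rho$ on a full-measure subevent) are used, presumably together with the equivalent local martingale measure $\mathbb{Q}$, under which $S$ is a genuine (nonconstant, since $\langle S\rangle$ is nontrivial) local martingale so its paths cannot be eventually monotone or constant on the oscillation set.

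For the strict inequality \eqref{eq:prob} I would arrange that the down-ramp of $\phi$ lies strictly above the up-ramp value-wise — e.g.\ by inserting a strictly positive ``gap'' $S_{\sigma_2} - S_{\sigma_1} \geqslant \delta > 0$ on a positive-probability event, obtained by a measurability argument: the map $\omega \mapsto$ (oscillation of $S$ on $[\rho,T]$) is $\mathscr{F}^S_T$-measurable and positive on $\{\langle S\rangle_T>0\}$ up to a null set, so it exceeds some deterministic $\delta$ with positive probability. On that event, for every $t\in(\rho,T]$ one checks that $\int_0^t \phi_u\,\ud S_u - \phi_t S_t = -\int_0^t S_u\,\ud\phi_u$ is strictly positive, because the integrand picks up the strict gap. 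Finally one must verify the regularity claims: $\phi$ càglàd (easy, since it will be continuous), finite variation (immediate from the piecewise-linear/monotone-pieces construction), nonnegative (by construction of $f\geqslant 0$), $\phi_0=0$, and that the Riemann--Stieltjes integral \eqref{eq:RS-int} exists and obeys integration by parts — which is exactly \cite[Theorems 1.2.3 and 1.2.13]{Str2011} as already invoked in the text.

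\textbf{Main obstacle.} The delicate point is not the algebra of integration by parts but the \emph{probabilistic} step: showing that on a set of positive probability one can select the non-adapted random times (or equivalently the deterministic profile $f$ acting on $\langle S\rangle$) so that $S$ is provably smaller on the up-ramp than on the down-ramp by a fixed amount. This requires turning ``$\langle S\rangle_T > 0$ with positive probability'' into a quantitative statement about genuine oscillation of the continuous path of $S$ after $\rho$ — essentially that a continuous local martingale with nontrivial quadratic variation cannot be monotone — and then extracting a deterministic threshold $\delta$ via a measurability/continuity-from-below argument. Handling the definition of $\rho$ as an infimum (and the convention $\inf\varnothing = \infty$, hence the $\wedge T$) carefully on the event $\{\langle S\rangle_T = 0\}$, where \eqref{eq:as} must still hold (there one simply takes $\phi \equiv 0$ or a trivially harmless profile), is a routine but necessary bookkeeping step.
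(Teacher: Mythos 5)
Your reduction via integration by parts is sound: with $\phi_0=0$, condition \eqref{eq:as} is equivalent to $\int_0^t S_u \,\ud \phi_u \leqslant 0$ for \emph{all} $t\in[0,T]$. But this is exactly where your proposed construction breaks down. Any profile with a ``first phase'' in which $\phi$ ramps up from $0$ (a single rise-then-fall $f(\langle S\rangle_t)$, or an up-ramp on $[\sigma_1,\sigma_2]$ followed by a later down-ramp) violates the condition at intermediate times: if $t$ lies inside the initial up-ramp, then $\ud\phi\geqslant 0$ on $[0,t]$ and $S>0$, so $\int_0^t S_u\,\ud\phi_u \geqslant (\inf_{[0,t]}S)\,\phi_t>0$, i.e.\ $\psi_t<0$ --- the first purchase is financed by borrowing. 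Having the eventual down-ramp ``dominate'' only controls the terminal value, not the running constraint; likewise your $\delta$-gap event would at best give the strict inequality after the down-ramp, not for all $t\in(\rho,T]$ as \eqref{eq:prob} requires. In short, a buy-low-sell-high round trip with a distinguishable first purchase is precisely what the no-borrowing constraint forbids, and no choice of the random times $\sigma_1<\sigma_2$ or of a rise-then-fall $f$ can repair this.

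The paper's proof avoids a first purchase altogether: $\phi$ is a countable cascade of static positions on intervals $(\rho_{n+1},\rho_n]$, defined by quadratic-variation stopping levels $c n^{-\gamma}$, accumulating at $\rho$ with sizes $H_n=\prod_{k\leqslant n}(1+\tfrac23 Z_k)^{-1}\to 0$, where $Z_n=(S_{\rho_n}-S_{\rho_{n+1}})^+$. The two genuinely hard steps, absent from your sketch, are quantitative: (i) the funding inequality $H_n<\sum_{k>n}H_k Z_k$ together with $\sum_n H_n<\infty$ (finite variation), which is what the specific product form delivers (Lemma \ref{lem:verification}); and (ii) a lower bound on the growth of $\sum_{k\leqslant n}Z_k$, obtained by passing to an equivalent local martingale measure and using the Dambis--Dubins--Schwarz time change so that the $Z_n$ become positive parts of Brownian increments at times $n^{-\gamma}$ (Lemma \ref{lem:BM-conditions}); this is where the hypotheses $\mathbb{Q}\sim\prob$ and $\prob(\langle S\rangle_T>0)>0$ enter, and it is much stronger than your qualitative ``oscillation exceeds some $\delta$ with positive probability'' step, which cannot fund even the first position of your scheme. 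So the proposal as written has a genuine gap in its central construction, not merely in the probabilistic bookkeeping you flag as the main obstacle.
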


\begin{figure}[!p] 
\centering
\begin{tabular}{rl} 
\includegraphics[scale=0.84,trim=0.6cm 1.5cm 0.7cm 1cm, clip=TRUE]{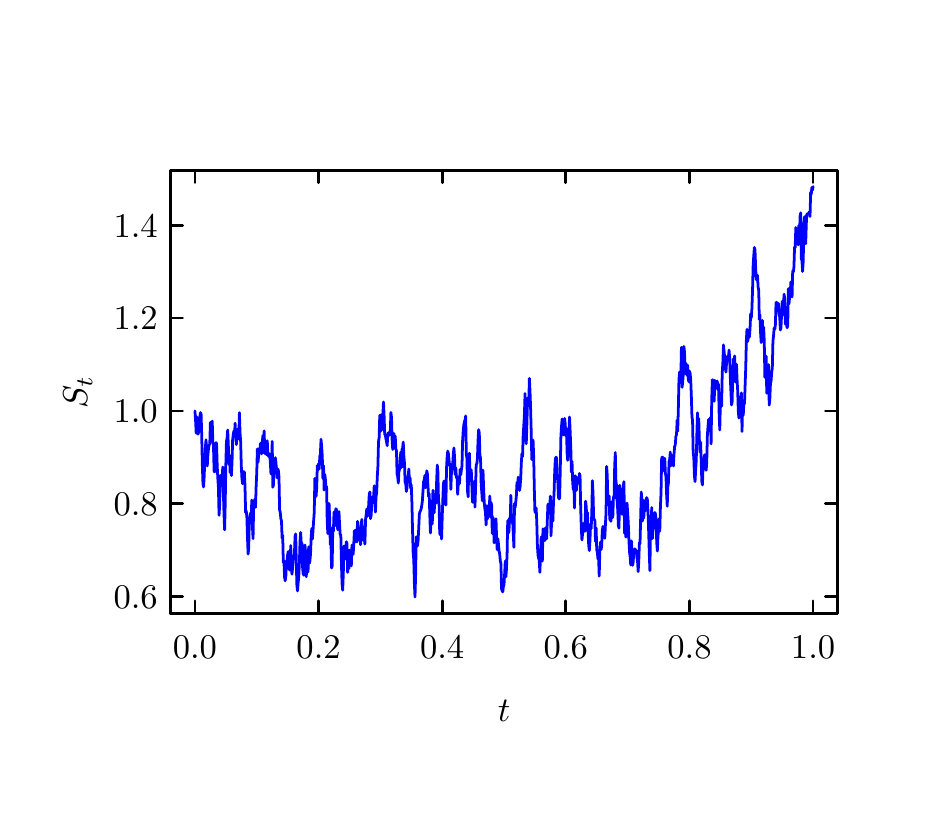} & \includegraphics[scale=0.84,trim=0.6cm 1.5cm 0.7cm 1cm, clip=TRUE]{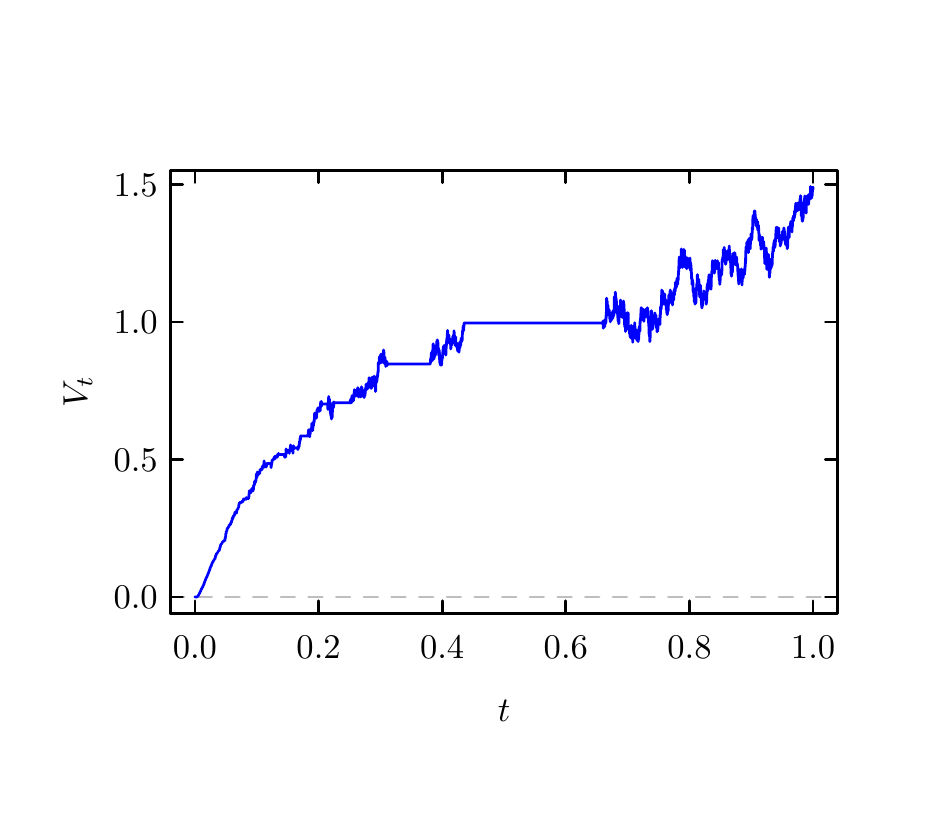} \\
\includegraphics[scale=0.84,trim=0.6cm 1cm 0.7cm 1.5cm, clip=TRUE]{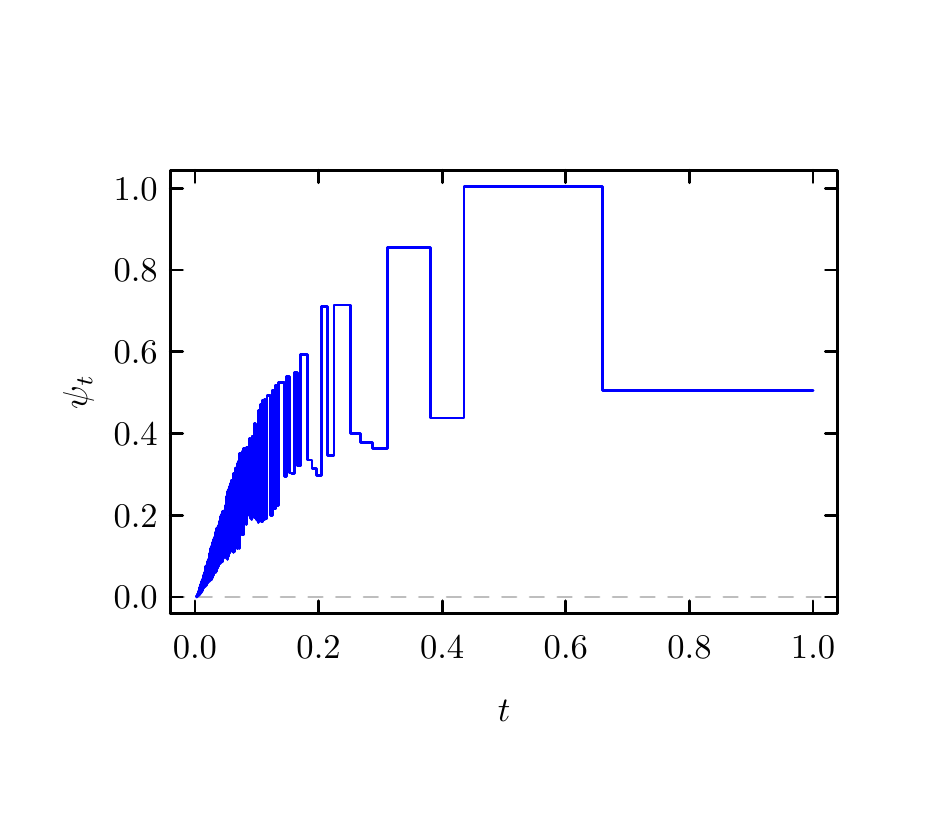} & \includegraphics[scale=0.84,trim=0.6cm 1cm 0.7cm 1.5cm, clip=TRUE]{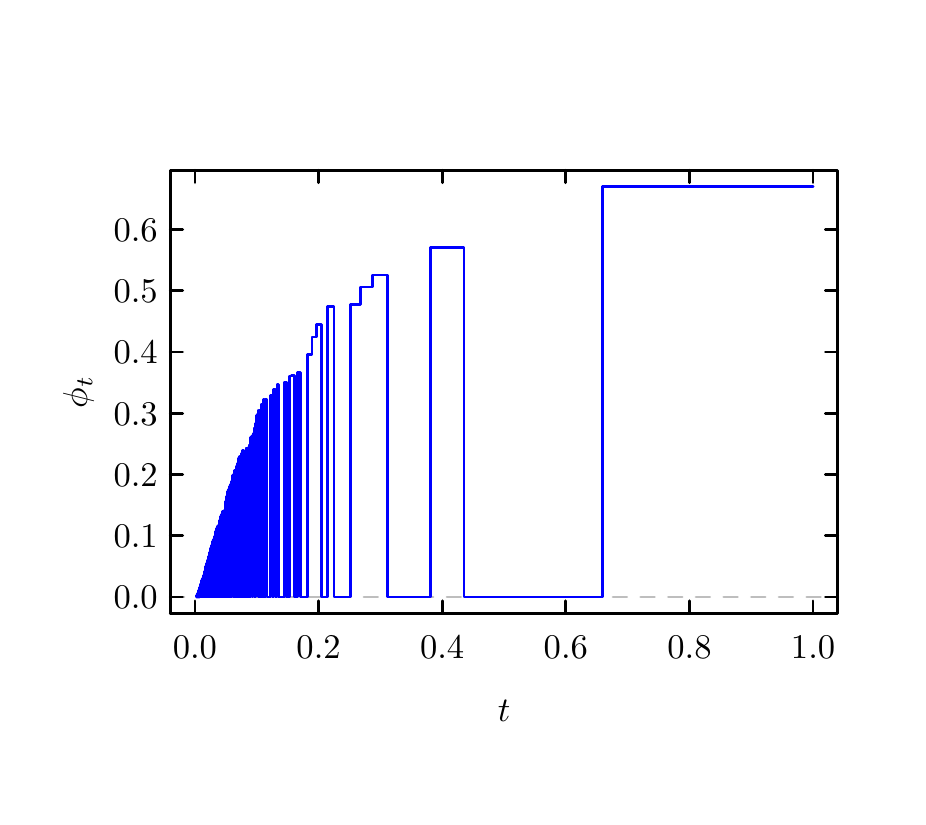}
\end{tabular}
\caption{Numerical illustration of Theorem \ref{thm:main}. In this example $T = 1$ and $S$ is a Brownian motion started at one, so that $\rho=0$. (Theoretically, the requirement that $S$ is positive can then be met, e.g., by reflecting or absorbing the process at some level between zero and one.)
The realisation of the process $\phi$ has been generated following the construction \eqref{K-def} given in the proof of Theorem \ref{thm:main}, below. Recall that $\psi_t = \int_0^t \phi_u \ud S_u - \phi_t S_t$ and $V_t = \psi_t + \phi_t S_t=\int_0^t \phi_u \ud S_u$.\label{fig:1}} 
\end{figure}

\begin{remark}\label{rem:mainres}
\begin{enumerate}[label=(\roman*),ref=\roman*,leftmargin=2.2em]
\item While not explicitly stated above, the process $\phi$ of Theorem \ref{thm:main} is indeed not (and could not be) adapted to $(\mathscr{F}^S_t)_{t \in [0,T]}$. The specification of $\phi_t$ for any $t\in(0,T]$ requires full knowledge of the path of $S$ until time $T$.
However, the process $\phi$ is adapted to the filtration
\begin{equation*}
\tilde{\mathscr{F}}^S_t \eqdefl \mathscr{F}^S_T, \quad t \in [0,T]\, ,
\end{equation*}
corresponding to perfect foresight on $S$, which also ensures that $\phi$ does not depend on any (external) randomness beyond $S$.
\item It is also worth stressing that the time horizon $T\in (0,\infty)$ can be chosen freely, as long as $\prob(\langle S \rangle_T > 0)>0$ is satisfied. In particular, if $S$ has strictly increasing quadratic variation, then we can choose $T$ to be arbitrarily small --- that is, prior knowledge of the fluctuations of $S$ is required only on a very short time interval. 
\item In mathematical finance literature, it is common to restrict trading strategies to be \emph{admissible}; see, e.g., \cite[Definition 2.7]{DS1994}. While there are actually several slightly differing definitions of admissibility, they have the commonality that the value process of an admissible strategy is bounded from below (in some sense). The purpose of admissibility conditions is to preclude some outright pathological trading strategies, such as \emph{doubling strategies} \cite[p.\ 467]{DS1994}. It is worth stressing that the process $\phi$ of Theorem \ref{thm:main} would \emph{not} violate the typical admissibility conditions as the corresponding value process $V_t = \int_0^t \phi_u \ud S_u$, $t \in [0,T]$, is non-negative due to the property \eqref{eq:as}. 
\end{enumerate}
\end{remark}

Curiously, the assumption about positive quadratic variation in Theorem \ref{thm:main} --- that is, $S$ exhibits ``enough'' fluctuation --- is rather crucial: Using a result \cite[Theorem 3.1]{LP2013} on the positivity of Riemann--Stieltjes integrals, we can show that arbitrage without borrowing or short selling is in fact eliminated in this setting if also the price process $S$ is of finite variation:

\begin{prop}\label{prop:finite}
Suppose that the positive continuous semimartingale $S=(S_t)_{t \in [0,T]}$ satisfies, $\prob$-a.s., $\langle S \rangle_T=0$.  If $\phi=(\phi_t)_{t \in [0,T]}$ is a non-negative process with c\`agl\`ad sample paths of finite variation such that
\begin{equation}\label{eq:phipositive}
\mathbb{P}( \phi_t >0 \textrm{ for some $t \in [0,T]$})>0\, ,
\end{equation}
then
\begin{equation}\label{eq:positiveleverage}
\prob\bigg(\int_0^t \phi_u \ud S_u < \phi_t S_t \textrm{ for some $t \in [0,T]$}\bigg)>0\, .
\end{equation}
\end{prop}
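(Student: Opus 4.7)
My plan is as follows. The assumption $\langle S\rangle_T = 0$ a.s.\ combined with $S$ being a continuous semimartingale forces the continuous local martingale part of $S$ to vanish, so $S$ is $\prob$-a.s.\ of finite variation on $[0,T]$. Both $\phi$ and $S$ are therefore of finite variation (with $S$ continuous and $\phi$ c\`agl\`ad), which allows the entire argument to proceed pathwise via classical Stieltjes calculus. Integration by parts for the Riemann--Stieltjes integral \eqref{eq:RS-int}, applicable since $S$ is continuous and both $S$ and $\phi$ have finite variation, yields the identity
\begin{equation*}
\phi_t S_t - \int_0^t \phi_u\, \ud S_u \;=\; \phi_0 S_0 + \int_{[0,t]} S_u\, \ud \phi_u,
\end{equation*}
where the right-hand integral is the Lebesgue--Stieltjes integral of the continuous function $S$ against the signed measure associated with $\phi$ (with the customary handling of jumps).

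The conclusion \eqref{eq:positiveleverage} is therefore equivalent to
\begin{equation*}
\prob\bigg( \phi_0 S_0 + \int_{[0,t]} S_u\, \ud\phi_u > 0 \textrm{ for some } t \in [0,T]\bigg) > 0.
\end{equation*}
I would then split into two cases. If $\prob(\phi_0 > 0) > 0$, positivity is immediate by taking $t = 0$ (using $S_0 > 0$). Otherwise $\phi_0 = 0$ a.s., and the problem reduces to the pathwise statement that on the positive-probability event $\{\phi_t > 0 \textrm{ for some } t\}$ provided by \eqref{eq:phipositive}, the indefinite integral $t \mapsto \int_{[0,t]} S_u\, \ud\phi_u$ must exceed zero somewhere. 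This pathwise positivity is exactly what the cited result \cite[Theorem 3.1]{LP2013} delivers for a strictly positive continuous integrand $S$ and a non-negative, non-trivial, c\`agl\`ad finite-variation integrator $\phi$ with $\phi_0 = 0$.

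\textbf{Main obstacle.} The only non-routine ingredient is this last step. Because $\phi$ need not be monotone, the signed measure $\ud\phi$ has both positive and negative parts, and a priori the continuous weight $S > 0$ could assign more mass to the ``down'' portion than to the ``up'' portion, keeping the integral non-positive throughout $[0,T]$. The constraint $\phi \geqslant 0$ rules this out---up to any time $t$ with $\phi_t > 0$, the accumulated up-variation of $\phi$ must dominate its accumulated down-variation---and this observation, combined with $S > 0$, is precisely what \cite[Theorem 3.1]{LP2013} turns into strict positivity of $\int_{[0,t]} S_u\, \ud\phi_u$ at some $t \in [0,T]$.
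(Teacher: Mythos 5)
Your argument is essentially the paper's own proof: use $\langle S\rangle_T=0$ to conclude that the paths of $S$ have finite variation, apply Riemann--Stieltjes integration by parts, and invoke \cite[Theorem 3.1]{LP2013} to obtain $\int_{[0,t]} S_u\,\ud\phi_u>0$ at some $t$ on the event where $\phi$ is non-trivial. The only cosmetic difference is your explicit case split on $\phi_0$, which the paper instead absorbs by noting that the hypothesis $g(a)=0$ in \cite[Theorem 3.1]{LP2013} can be weakened to $g(a)\geqslant 0$.
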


\begin{proof}
The integration by parts formula for Riemann--Stieltjes integrals \cite[Theorem 1.2.3]{Str2011} yields
\begin{equation*}
\int_0^t \phi_u \ud S_u = \phi_t S_t - \phi_0 S_0 - \int_0^t S_u \ud \phi_u, \quad t \in [0,T]\, .
\end{equation*}
Note that since $S$ is a continuous semimartingale, the assumption $\langle S \rangle_T=0$ implies that the sample paths of $S$ are of finite variation. Now if $\phi_t >0$ for some $t \in [0,T]$, then it follows\footnote{The term \emph{non-vanishing} in the statement of \cite[Theorem 3.1]{LP2013} is potentially misleading. The appropriate interpretation is that \emph{the integrand $g$ should not be identically zero}. It is also worth mentioning that the assumption $g(a)=0$ therein can be trivially weakened to $g(a)\geqslant 0$; see \cite[p.\ 401]{LP2013}.} from \cite[Theorem 3.1]{LP2013} that
\begin{equation*}
\int_0^t S_u \ud \phi_u > 0 \quad \textrm{for some $t \in [0,T]$}\, .
\end{equation*}
The probability \eqref{eq:positiveleverage} is thus greater than or equal to the probability \eqref{eq:phipositive}, and the assertion follows.
\end{proof}

\begin{remark}
In some way, Theorem \ref{thm:main} and Proposition \ref{prop:finite} defy the usual mathematical finance intuition that ``smooth'' price processes are easier to arbitrage than ``rough'' ones. Here the ``roughness'' of $S$ is the very property that makes it possible to construct the process $\phi$ in Theorem \ref{thm:main}.
\end{remark}

In Theorem \ref{thm:main}, we assume that the price process $S$ is arbitrage-free whilst the strategy $\phi$ may not be adapted. This is, of course, only one of the possible departures from the standard arbitrage-free setting. Alternatively, one could also consider a scenario where the process $S$ is a very general continuous process that may admit arbitrage and $\phi$ is an adapted strategy of finite variation and ask, how the stronger form of arbitrage without short selling and borrowing can be excluded. This looks less straightforward and may require some new techniques and estimates for Riemann--Stieltjes integrals, so we leave the question open:

\begin{open}
When $S$ is a general positive, continuous process (not necessarily a semimartingale), under which conditions on $S$ is arbitrage without borrowing or short selling excluded in the context of strategies of finite variation? We remark that, to this end, the process $S$ should satisfy some kind of a non-degeneracy condition, as integrands similar to $\phi$ of Theorem \ref{thm:main} can be constructed for deterministic continuous paths that exhibit enough variation; see \cite[Theorem 2.1]{LP2013}.

\end{open}

\section{Proof of Theorem \ref{thm:main}}\label{sec:proof}

Before proving Theorem \ref{thm:main} rigorously, we describe intuitively how the process $\phi$ is constructed. The idea is to structure $\phi$ from a sequence of non-overlapping static positions in the risky asset, so that they have an ``accumulation point'' at $\rho$, see Figure \ref{fig:1}, bottom-right panel, for an illustration. The sizes of these static positions are chosen so that they are gradually increasing (from zero) and the positions are timed, using the quadratic variation of $S$ and perfect foresight, so that the price of the asset is known to increase during each holding period.

While the construction of $\phi$ this way is simple in principle, it is non-trivial to select the sizes of the static positions so that:
\begin{itemize}
\item each position can be fully funded using the profits from the preceding positions (without needing to borrow money),
\item the cumulative trading volume remains finite, which is equivalent to $\phi$ being of finite variation.
\end{itemize}
In fact most of the theoretical arguments in the proof of Theorem \ref{thm:main} revolve around verifying that these two requirements are indeed met.

We introduce now some additional notation that are needed in the sequel.
For all $x,y \in \R$, we denote $x \vee y \eqdefl \max \{ x,y\}$ and $x^+ \eqdefl x \vee 0$. 
If $X$ and $Y$ are identically distributed random variables, we write $X\stackrel{d}{=}Y$.
Suppose that $A \in \mathscr{F}$. Then we say that a property $\mathscr{P}$ (provided that it is ``$\mathscr{F}$-measurable'') holds $\prob$-a.s.\ on $A$, if $\prob(\{\mathscr{P} \} \cap A) = \prob(A)$. 
We use the convention that $\N \eqdefl \{1,2,\ldots\}$.

As a preparation, we prove now two technical lemmata, which will be instrumental in the proof of Theorem \ref{thm:main}.

\begin{lemma}\label{lem:verification}
Let $(y_n)_{n=1}^\infty$ be a sequence of non-negative numbers such that $\lim_{n\rightarrow \infty} y_n = 0$. Suppose 
that for some $\alpha \in (0,1)$,
\begin{equation}\label{eq:ysumassump}
\sum_{n=1}^\infty e^{-\alpha \sum_{k=1}^n\limits y_k}<\infty\, .
\end{equation}
If we set $\beta \eqdefl \frac{2 \alpha}{1+\alpha}$ and define a sequence $(x_n)_{n=1}^\infty$ of non-negative numbers by
\begin{equation}\label{eq:defxnseq}
x_n \eqdefl \prod_{k=1}^n \frac{1}{1+\beta y_k}, \quad n\in \N\, ,
\end{equation}
then $\sum_{n=1}^\infty x_n <\infty$ and 
\begin{equation}\label{eq:xnineq}
x_n < \sum_{k = n+1}^\infty x_k y_k <\infty \quad \textrm{for any $n \in \N$}\, .
\end{equation}
\end{lemma}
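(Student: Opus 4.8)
The plan is to work directly with the explicit formula \eqref{eq:defxnseq} for $x_n$ and translate everything into statements about the partial sums $Y_n \eqdefl \sum_{k=1}^n y_k$. First I would observe that, since $1+\beta y_k \leqslant e^{\beta y_k}$, we have $x_n \geqslant e^{-\beta Y_n}$, and conversely, using $\log(1+t)\geqslant t - t^2/2$ together with $y_k\to 0$ (so the $y_k$ are eventually small and $\sum y_k^2$ behaves like $\sum y_k$ times a vanishing factor — actually I only need $\log(1+\beta y_k) \geqslant c\,\beta y_k$ for a constant $c$ arbitrarily close to $1$ once $k$ is large), one gets a matching bound $x_n \leqslant C e^{-\beta' Y_n}$ for any $\beta' < \beta$ and a suitable constant $C$. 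Since $\beta = \frac{2\alpha}{1+\alpha} > \alpha$ (as $\alpha<1$), I can pick $\beta'$ with $\alpha < \beta' < \beta$; then $\sum_n x_n \leqslant C\sum_n e^{-\beta' Y_n} \leqslant C \sum_n e^{-\alpha Y_n} < \infty$ by assumption \eqref{eq:ysumassump}, which gives the summability of $(x_n)$ and simultaneously the finiteness of the right-hand tail sum in \eqref{eq:xnineq} (since $x_k y_k \leqslant x_k$ as the $y_k$ are eventually bounded, in fact $y_k\to 0$).

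The heart of the matter is the inequality $x_n < \sum_{k=n+1}^\infty x_k y_k$. Here I would exploit the multiplicative structure: from \eqref{eq:defxnseq}, $x_k = x_n \prod_{j=n+1}^{k} \frac{1}{1+\beta y_j}$ for $k > n$, so dividing through by $x_n$ reduces the claim to
\[
1 < \sum_{k=n+1}^\infty y_k \prod_{j=n+1}^{k}\frac{1}{1+\beta y_j}\,.
\]
Now comes the key telescoping identity: writing $p_k \eqdefl \prod_{j=n+1}^{k}\frac{1}{1+\beta y_j}$ (with $p_n \eqdefl 1$), one has $p_{k-1} - p_k = p_{k-1}\left(1 - \frac{1}{1+\beta y_k}\right) = p_{k-1}\frac{\beta y_k}{1+\beta y_k} = \beta y_k p_k$. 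Hence $\sum_{k=n+1}^{N} y_k p_k = \frac{1}{\beta}\sum_{k=n+1}^{N}(p_{k-1}-p_k) = \frac{1}{\beta}(1 - p_N)$. Since $p_N \to 0$ as $N\to\infty$ (because $\sum \log(1+\beta y_j)$ diverges — it is bounded below by a multiple of $\sum y_j$, which diverges since otherwise $e^{-\alpha Y_n}\not\to 0$ and \eqref{eq:ysumassump} would fail), we get $\sum_{k=n+1}^\infty y_k p_k = \frac{1}{\beta}$. Therefore the strict inequality $x_n < \sum_{k=n+1}^\infty x_k y_k$ is equivalent to $\frac{1}{\beta} > 1$, i.e. $\beta < 1$ — and indeed $\beta = \frac{2\alpha}{1+\alpha} < 1$ precisely because $\alpha < 1$. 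This also retroactively explains the peculiar choice of $\beta$: it must be large enough ($\beta > \alpha$) for summability of $x_n$, yet strictly less than $1$ for the self-funding inequality, and $\frac{2\alpha}{1+\alpha}$ is a convenient value strictly between $\alpha$ and $\min\{1, 2\alpha\}$.

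I expect the telescoping computation itself to be routine; the main obstacle — really the only subtle point — is handling the two asymptotic regimes cleanly. On one hand I need $\sum y_k = \infty$ to conclude $p_N \to 0$ (used for the telescoped sum to equal exactly $1/\beta$), and this must be deduced from \eqref{eq:ysumassump}; the quick argument is that if $\sum y_k$ converged then $Y_n$ would be bounded, so $e^{-\alpha Y_n}$ would be bounded away from $0$, contradicting convergence of $\sum e^{-\alpha Y_n}$. On the other hand, for the summability $\sum x_n < \infty$ I need the comparison $x_n \lesssim e^{-\beta' Y_n}$ with $\beta' > \alpha$, which requires being slightly careful because $\log(1+\beta y_k)$ is only asymptotically $\beta y_k$: since $y_k \to 0$, for every $\beta' < \beta$ there is $N_0$ with $\log(1+\beta y_k) \geqslant \beta' y_k$ for $k \geqslant N_0$, and the finitely many earlier terms contribute only a bounded multiplicative constant $C$. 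Once these two elementary observations are in place, the lemma follows by assembling the pieces as above.
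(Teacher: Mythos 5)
Your argument is correct and follows essentially the same route as the paper: the same telescoping identity $x_{k-1}-x_k=\beta x_k y_k$ (giving the tail sum the exact value $\beta^{-1}x_n$, so the strict inequality reduces to $\beta<1$) and the same comparison $\log(1+\beta y_k)\geqslant \beta' y_k$ with $\alpha<\beta'<\beta$ for large $k$ to deduce $\sum_n x_n<\infty$ from \eqref{eq:ysumassump}. The only cosmetic difference is that you justify $x_N/x_n\to 0$ via divergence of $\sum_k y_k$ (itself forced by \eqref{eq:ysumassump}), whereas the paper deduces $x_N\to 0$ from the already-established summability of $(x_n)$; both are valid.
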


\begin{proof}
Consider a sequence $(y_n)_{n=1}^\infty$ and $\alpha\in (0,1)$ that satisfy the assumptions given above.  Define
then $\beta \eqdefl \frac{2 \alpha}{1+\alpha}$ and a sequence $(x_n)_{n=1}^\infty$ through \eqref{eq:defxnseq}.  Clearly, then
$0<\alpha<\beta<1$ and $0<x_n\leqslant 1$ for all $n \in \N$.

By the definition \eqref{eq:defxnseq},
\begin{equation*} 
x_{n-1}=(1+\beta y_n) x_n = x_n + \beta x_n y_n \quad \textrm{for $n\geqslant 2$}\, .  
\end{equation*}
Thus for all $n,N\in \N$ with $n<N$, we have
\begin{align}\label{eq:xysump}
 \beta \sum_{k=n+1}^N x_k y_k = x_n-x_N \, .
\end{align}
If $\sum_{k=1}^\infty x_k <\infty$, then necessarily $x_k\to 0$ as $k\to \infty$.  Therefore, we 
may take $N\to \infty$ in (\ref{eq:xysump}) and conclude that for every $n\in \N$,
\begin{align*}
 x_n <  \beta^{-1} x_n = \sum_{k=n+1}^\infty x_k y_k<\infty \, ,
\end{align*}
and, thus, (\ref{eq:xnineq}) holds then.
To complete the proof,
it remains to show that $\sum_{k=1}^\infty x_k <\infty$.

To prove that the assumptions on $(y_n)_{n=1}^\infty$ imply the summability of the sequence $(x_n)_{n=1}^\infty$, 
we rely on the inequality
$\log (1+x) \geqslant x/(1+x)$, which holds for all $x \geqslant 0$. (This inequality can be proven using the integral representation $\log(1+x)=\int_0^x (1+y)^{-1}\ud y$ 
and the monotonicity of the integrand therein.)  Hence, for all $k\geqslant n\geqslant 1$ we have
\begin{align*}
 -\log (1+\beta y_k) \leqslant -\gamma_n \beta y_k =- \frac{2 \gamma_n}{1+\alpha} \alpha y_k \, ,
\end{align*}
where $\gamma_n := 1/(1+\beta  \sup_{m\geqslant n} y_m)$.  Since $\lim_{k\to \infty} y_k= 0$, here $\gamma_n\nearrow 1$ as $n\to\infty$.
In particular, there exists $n_0\in \N$ such that $\gamma_{n_0} \geqslant (1+\alpha)/2$ and for all $k\geqslant n_0$ we then have 
$-\log(1+\beta y_k) \leqslant -\alpha y_k$.
Using the above estimates to the representation $x_n = \prod_{k=1}^n e^{-\log(1+\beta y_k)}$ 
thus shows that for all $n> n_0$, 
\begin{align*}
 0< x_n \leqslant x_{n_0} \exp\Bigl(-\alpha \sum_{k=n_0+1}^n y_k\Bigr) =
 x_{n_0} \exp\Bigl(\alpha \sum_{k=1}^{n_0} y_k\Bigr)  \exp\Bigl(-\alpha \sum_{k=1}^n y_k\Bigr) 
 \, .
\end{align*}
So (\ref{eq:ysumassump}) ensures that, indeed, $\sum_{n=1}^\infty x_n <\infty$.
\end{proof}

\begin{lemma}\label{lem:BM-conditions}
Let $B=(B_t)_{t \in [0,1]}$ be a standard Brownian motion and let $\sigma > 0$. If we define for some $\gamma \in (0,1)$, 
\begin{equation*}
\xi_n \eqdefl \big(\sigma B_{n^{-\gamma}}-\sigma B_{(n+1)^{-\gamma}}\big)^+, \quad n \in \N\, ,
\end{equation*}
then for any $\alpha >0$,
\begin{equation*}
\E\Bigg( \sum_{n=1}^\infty e^{-\alpha \sum\limits_{k=1}^n \xi_k}\Bigg)<\infty\, .
\end{equation*}
\end{lemma}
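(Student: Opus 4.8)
The plan is to reduce everything to a single, concrete Gaussian computation. Write $\Delta_n \eqdefl \sigma B_{n^{-\gamma}} - \sigma B_{(n+1)^{-\gamma}}$. Since $B$ has independent increments, the $\Delta_n$ are independent centred Gaussian random variables, and their variances are
\begin{equation*}
v_n \eqdefl \Var(\Delta_n) = \sigma^2\big(n^{-\gamma} - (n+1)^{-\gamma}\big)\, .
\end{equation*}
A Taylor/mean-value estimate gives $n^{-\gamma} - (n+1)^{-\gamma} = \gamma\,\theta_n^{-\gamma-1}$ for some $\theta_n \in (n,n+1)$, hence $v_n \asymp n^{-\gamma-1}$; more precisely there is a constant $c>0$ with $v_n \geqslant c\, n^{-\gamma-1}$ for all $n$. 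So each $\xi_n = \Delta_n^+$ is the positive part of an $N(0,v_n)$ variable, and $\E \xi_n = \sqrt{v_n/(2\pi)} \asymp n^{-(\gamma+1)/2}$.

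The key step is to bound $\E\big(e^{-\alpha \sum_{k=1}^n \xi_k}\big)$ from above by something summable in $n$. By independence,
\begin{equation*}
\E\Big(e^{-\alpha \sum_{k=1}^n \xi_k}\Big) = \prod_{k=1}^n \E\big(e^{-\alpha \xi_k}\big)\, .
\end{equation*}
So I need a good upper bound on $\E(e^{-\alpha \xi_k})$, uniformly usable as $k$ varies. The cleanest route is: for a centred Gaussian $Z \sim N(0,v)$ with $v$ small, $\E(e^{-\alpha Z^+}) = \tfrac12 + \tfrac12 \E(e^{-\alpha|Z|}) \leqslant 1 - \tfrac12\E\big(\alpha|Z| - \tfrac{\alpha^2}{2}Z^2\big) = 1 - \tfrac{\alpha}{2}\E|Z| + \tfrac{\alpha^2}{4}v$, using $e^{-x} \leqslant 1 - x + x^2/2$ for $x \geqslant 0$ on the event $Z<0$ (and $e^{-x}\le 1$ trivially, which is why only the negative half contributes the linear gain). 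Since $\E|Z| = \sqrt{2v/\pi}$, we get $\E(e^{-\alpha \xi_k}) \leqslant 1 - \tfrac{\alpha}{2}\sqrt{2 v_k/\pi} + \tfrac{\alpha^2}{4} v_k$. As $v_k \to 0$, the $\sqrt{v_k}$ term dominates the $v_k$ term, so for $k$ large enough, say $k \geqslant k_0$, we have $\E(e^{-\alpha\xi_k}) \leqslant 1 - a k^{-(\gamma+1)/2}$ for a suitable constant $a>0$ (absorbing $c$, $\sigma$, $\alpha$). Then $1 - a k^{-(\gamma+1)/2} \leqslant \exp(-a k^{-(\gamma+1)/2})$, and multiplying over $k_0 \leqslant k \leqslant n$,
\begin{equation*}
\E\Big(e^{-\alpha \sum_{k=1}^n \xi_k}\Big) \leqslant \E\Big(e^{-\alpha\sum_{k=1}^{k_0-1}\xi_k}\Big)\exp\Big(-a\sum_{k=k_0}^n k^{-(\gamma+1)/2}\Big) \leqslant C\exp\Big(-a\sum_{k=k_0}^n k^{-(\gamma+1)/2}\Big)\, ,
\end{equation*}
with $C = 1$ in fact (the discarded factor is $\leqslant 1$).

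Finally, since $\gamma \in (0,1)$ we have $(\gamma+1)/2 \in (1/2,1)$, so $\sum_k k^{-(\gamma+1)/2}$ diverges like $n^{1-(\gamma+1)/2} = n^{(1-\gamma)/2}$ up to constants; hence $\sum_{k=k_0}^n k^{-(\gamma+1)/2} \geqslant b\, n^{(1-\gamma)/2} - b'$ for constants $b>0$, $b'$. Therefore $\E\big(e^{-\alpha\sum_{k=1}^n \xi_k}\big) \leqslant C' e^{-a b\, n^{(1-\gamma)/2}}$, which is summable over $n \in \N$ (a stretched-exponential tail). Applying Tonelli's theorem to interchange the (non-negative) sum and expectation then yields $\E\big(\sum_{n=1}^\infty e^{-\alpha\sum_{k=1}^n \xi_k}\big) = \sum_{n=1}^\infty \E\big(e^{-\alpha\sum_{k=1}^n \xi_k}\big) < \infty$, as claimed. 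I expect the main obstacle to be purely bookkeeping: getting the one-sided inequality $\E(e^{-\alpha\xi_k}) \leqslant 1 - ak^{-(\gamma+1)/2}$ with honest constants and a clean threshold $k_0$, rather than any conceptual difficulty — the independence of increments does all the heavy lifting, and the exponent $(\gamma+1)/2$ staying strictly below $1$ is exactly what makes the partial sums of $\sqrt{v_k}$ diverge fast enough.
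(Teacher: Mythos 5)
Your proof is correct and follows essentially the same route as the paper's: Tonelli plus independence of the Brownian increments reduces everything to bounding the product $\prod_{k\leqslant n}\E\big(e^{-\alpha\xi_k}\big)$, and in both arguments each factor is shown to contribute a decay of order $\exp\big(-a\,k^{-(\gamma+1)/2}\big)$, so that the partial sums diverge like $n^{(1-\gamma)/2}$ and the resulting stretched-exponential bound is summable. The only (minor) difference is the single-factor estimate: you use the elementary expansion $e^{-x}\leqslant 1-x+x^2/2$ together with the explicit half-Gaussian law of $\xi_k$ (requiring a threshold $k_0$ where the quadratic term is dominated), whereas the paper gets $\E\big(e^{-u B^+_1}\big)\leqslant e^{-c_1 u}$ uniformly on $(0,\alpha\sigma)$ via a Jensen-inequality argument — both yield the same conclusion with honest constants.
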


\begin{proof} 
By the self-similarity of Brownian motion, we have $\xi_n \stackrel{d}{=} u_n B^+_1$ for any $n \in \N$, where
\begin{equation*}
u_n \eqdefl \sigma \sqrt{n^{-\gamma} - (n+1)^{-\gamma}}\, .
\end{equation*}
Applying the mean value theorem to the function $x \mapsto x^{-\gamma}$, we deduce that 
\begin{equation}\label{eq:mvtbounds}
\gamma^{\frac{1}{2}} \sigma (n+1)^{-p} \leqslant u_n  \leqslant \gamma^{\frac{1}{2}}\sigma n^{-p}\, ,
\end{equation}
with $p \eqdefl \frac{\gamma+1}{2} \in \big(\frac{1}{2},1\big)$.
 
Let us now fix $\alpha>0$. By Tonelli's theorem and the mutual independence of the random variables $\xi_1,\xi_2,\ldots$, we obtain
\begin{equation}\label{eq:phiproduct}
\E\Bigg(\sum\limits_{n=1}^\infty e^{-\alpha \sum\limits_{k=1}^n \xi_k}\Bigg) = \sum\limits_{n=1}^\infty\E\bigg(e^{-\alpha \sum\limits_{k=1}^n \xi_k} \bigg) = \sum\limits_{n=1}^\infty \prod_{k=1}^n \varphi(-\alpha u_n)\, ,
\end{equation} 
where $\varphi(u) \eqdefl \E\big(e^{u B^+_1}\big)$, $u \in \R$. Since the  $\prob(B^+_1\geqslant 0)=1$, $\E(B^+_1)<\infty$
and $\prob(B^+_1>0) = \frac{1}{2}>0$, we have
\begin{align*}
 c_1 \eqdefl \inf_{v\in (0, \alpha\sigma)}
 \frac{\E\big(B^+_1 e^{-v B^+_1}\big)}{\E\big(e^{-v B^+_1}\big)}
\in (0,\infty)
 \,. 
\end{align*}
By Jensen's inequality, for $u\in (0,\alpha\sigma)$,
\begin{align*}
 \frac{1}{\varphi(-u)} = \frac{\E\big(e^{u B^+_1} e^{-u B^+_1}\big)}{\E\big(e^{-u B^+_1}\big)} \geqslant
\exp\Bigg(u\frac{\E\big(B^+_1 e^{-u B^+_1}\big)}{\E\big(e^{-u B^+_1}\big)}\Bigg)
 \geqslant
\exp(u c_1)\, ,
\end{align*}
which implies that 
\begin{equation}\label{eq:prod-exp}
 \prod_{k=1}^n \varphi(-\alpha u_k) \leqslant \exp\bigg(-c_1  \sum_{k=1}^n u_k\bigg)\, ,
\end{equation}
for any $n \in \N$.

Using the lower bound in \eqref{eq:mvtbounds} we can estimate, for any $n \in \N$, 
\begin{equation}\label{eq:u-lower}
 \sum_{k=1}^n u_k \geqslant \gamma^{\frac{1}{2}} \sigma \sum_{k=1}^n (k+1)^{-p} \geqslant \gamma^{\frac{1}{2}} \sigma \int_2^{n+2} x^{-p} \ud x \geqslant c_2 n^{1-p}\, ,
\end{equation}
where $c_2 = c_2(\gamma,\sigma,p)>0$ is a constant. Now note that for any exponent $\theta>0$, there exists a constant $c_3 = c_3(\theta)>0$ such that $e^x \geqslant c_3 x^\theta$, $x\geqslant 0$. Thus, applying \eqref{eq:u-lower} to \eqref{eq:prod-exp}, we find that for any $n \in \N$,
\begin{equation*}
\prod_{k=1}^n \varphi(-\alpha u_k) \leqslant e^{-c_1 c_2  n^{1-p}} \leqslant \frac{1}{(c_1 c_2 )^{\theta}c_3} \frac{1}{n^{\theta(1-p)}}\, ,
\end{equation*}
and, in view of \eqref{eq:phiproduct}, it remains to choose $\theta > \frac{1}{1-p}$.
\end{proof}

The proof of Theorem \ref{thm:main} is based on the observation that the properties \eqref{eq:as} and \eqref{eq:prob} the process $\phi$ is expected to satisfy are robust to time changes and equivalent changes of the probability measure. Under the assumptions of Theorem \ref{thm:main}, we can represent the process $S$ as a time-changed Brownian motion under an equivalent local martingale measure. Therefore we can verify \eqref{eq:as} and \eqref{eq:prob} relying on the properties of Brownian motion via Lemmata \ref{lem:verification} and \ref{lem:BM-conditions}.

\begin{proof}[Proof of Theorem \ref{thm:main}]
The properties of the process $\phi$ to be constructed are clearly invariant under rescaling of the process $S$ by a positive constant. By rescaling $S$, the probability of the event $\{\sup_{t \in [0,T]} S_t \leqslant 1\}$ can be made to be arbitrarily close to one. In particular, we may assume, without loss of generality, that
\begin{equation*}
\prob\Big(\sup_{t \in [0,T]} S_t \leqslant 1\Big) > 1- \prob(\langle S \rangle_T >0)\, ,
\end{equation*}
where, by assumption, $\prob(\langle S \rangle_T >0)$ is positive and remains so even after the process $S$ has been rescaled.
This implies that $\prob(\sup_{t \in [0,T]} S_t \leqslant 1,\, \langle S \rangle_T > 0)>0$, so we can find a constant $c>0$ such that the event $A_c \eqdefl \{\sup_{t \in [0,T]} S_t \leqslant 1\} \cap \{ \langle S \rangle_T > c\}$ satisfies $\prob(A_c)>0$.

Let now $\gamma \in (0,1)$, and introduce the stopping times
\begin{equation*}
\rho_n \eqdefl \inf \{ t \in [0,T] : \langle S \rangle_t \geqslant cn^{-\gamma} \} \wedge T, \quad n \in \N\, .
\end{equation*}
Since the process $S$ is continuous, also its quadratic variation process $\langle S \rangle$ is $\prob$-a.s.\ continuous \cite[Theorem 17.5]{Kal2002}. Thus we have $\prob$-a.s.\ on $A_c$,
\begin{equation*}
0 \leqslant \rho < \cdots < \rho_n<\cdots < \rho_2 < \rho_1 < T
\end{equation*}
and $\rho_n \searrow \rho$ as $n\rightarrow \infty$. Note additionally that
\begin{equation}\label{rho-times}
\langle S \rangle_{\rho_n} = cn^{-\gamma}\quad \textrm{$\prob$-a.s.\ on $A_c$ for any $n \in \N$}\, .
\end{equation}
We also introduce the random variables
\begin{equation*}
Z_n \eqdefl (S_{\rho_{n}} - S_{\rho_{n+1}})^+,\quad n \in \N\, ,
\end{equation*}
which will be instrumental in what follows.

Let now $\mathbb{Q} \sim \prob$ be such that $S$ is a local $\mathbb{Q}$-martingale. Then, clearly, $\mathbb{Q}(A_c)>0$. By the Dambis--Dubins--Schwarz theorem \cite[Theorem 18.4]{Kal2002}, there exists a standard Brownian motion $B=(B_t)_{t \geqslant 0}$ defined on an extension $\big(\bar{\Omega},\bar{\mathscr{F}},\bar{\mathbb{Q}}\big)$ of $(\Omega,\mathscr{F},\mathbb{Q})$, such that the scaled Brownian motion $B'_t \eqdefl \sqrt{c} B_t$, $t \geqslant 0$, satisfies
\begin{equation*}
S_t = S_0+B'_{c^{-1}\langle S \rangle_t} \quad \textrm{for all $t \in [0,T]$ $\bar{\mathbb{Q}}$-a.s.}\, .
\end{equation*}
Then, in view of \eqref{rho-times}, it follows that the sequence
\begin{equation*}
\xi_n \eqdefl (B'_{n^{-\gamma}} - B'_{(n+1)^{-\gamma}})^+ = (\sqrt{c}B_{n^{-\gamma}} - \sqrt{c}B_{(n+1)^{-\gamma}})^+,\quad n \in \N\, ,
\end{equation*}
satisfies 
\begin{equation}\label{Z-equiv}
Z_n = \xi_n \quad \textrm{$\bar{\mathbb{Q}}$-a.s.\ on $A_c$ for any $n \in \N$}\, . 
\end{equation}
Applying Lemma \ref{lem:BM-conditions} to the random variables $\xi_1,\xi_2,\ldots$ with $\sigma = \sqrt{c}$ and then using the equality \eqref{Z-equiv}, we deduce that, for any $\alpha \in (0,1)$,
\begin{equation}\label{Z-prop}
\sum\limits_{n=1}^\infty e^{-\alpha \sum\limits_{k=1}^n Z_k}  < \infty
\end{equation}
$\bar{\mathbb{Q}}$-a.s.\ on $A_c$.
Since the random variables $Z_1,Z_2,\ldots$ are defined on the original space $(\Omega,\mathscr{F})$, the condition \eqref{Z-prop} also holds $\mathbb{Q}$-a.s.\ on $A_c$, and thus $\prob$-a.s.\ on $A_c$ as well (due to the relation $\mathbb{Q} \sim \prob$).

We define now the process $\phi$ by
\begin{equation}\label{K-def}
\phi_t \eqdefl   \sum_{n=1}^\infty H_n\mathbf{1}_{\{ Z_n >0 \} \cap A_c} \mathbf{1}_{(\rho_{n+1},\rho_{n}]}(t), \quad t \in [0,T]\, ,
\end{equation}
where
\begin{equation*}
H_n \eqdefl \prod_{k=1}^n \frac{1}{1+\frac{2}{3} Z_k} , \quad n \in \N \, .
\end{equation*}
(Note that in \eqref{K-def}, at most one of the summands is non-zero for fixed $t$, which dispels any concerns about convergence of the random sum.) Since \eqref{Z-prop} holds $\prob$-a.s.\ on $A_c$, Lemma \ref{lem:verification} with $\alpha = \frac{1}{2}$ ensures that $\sum_{n=1}^\infty H_n < \infty$ $\prob$-a.s.\ on $A_c$, which in turn implies that the process $\phi$ is $\prob$-a.s.\ c\`agl\`ad and of finite variation  with $\phi_0=0$. Thus, by \cite[Theorems 1.2.3 and 1.2.13]{Str2011}, the stochastic integral $\int_0^t \phi_u \ud S_u$ exists as a pathwise Riemann--Stieltjes integral for any $t \in [0,T]$ and is given by
\begin{equation*}
\int_0^t \phi_u \ud S_u = \sum_{n=1}^\infty H_n\mathbf{1}_{\{ Z_n >0 \} \cap A_c} (S_{t \wedge \rho_n}-S_{t \wedge \rho_{n+1}})\, .
\end{equation*}

Let $n \in \N$ and $t \in (0,T]$. Then we have $\prob$-a.s.\ on $A_c \cap \{ \rho_{n+1} <t \leqslant\rho_n\}$,
\begin{equation*}
\begin{split}
\int_0^t \phi_u \ud S_u & = \sum_{k=n+1}^\infty H_k\mathbf{1}_{\{ Z_k >0 \}} (S_{\rho_k}-S_{\rho_{k+1}}) + H_n\mathbf{1}_{\{ Z_n >0 \}} (S_{t}-S_{\rho_{n+1}}) \\
& = \sum_{k=n+1}^\infty H_k Z_k - H_n\mathbf{1}_{\{ Z_n >0 \}}S_{\rho_{n+1}} + \phi_t S_t\, . 
\end{split}
\end{equation*}
Invoking again the fact that \eqref{Z-prop} holds $\prob$-a.s.\ on $A_c$, Lemma \ref{lem:verification} with $\alpha = \frac{1}{2}$ implies that
\begin{equation*}
\sum_{k=n+1}^\infty H_k Z_k > H_n \geqslant H_n\mathbf{1}_{\{ Z_n >0 \}}S_{\rho_{n+1}} \quad \textrm{$\prob$-a.s.\ on $A_c$}\, ,
\end{equation*}
where second inequality follows since $A_c \subset \{\sup_{t \in [0,T]} S_t \leqslant 1\} $. Note additionally that
\begin{equation*}
\int_0^t \phi_u \ud S_u = \sum_{k=1}^\infty H_k Z_k \geqslant \sum_{k=2}^\infty H_k Z_k > H_1 \geqslant 0 = \phi_t S_t \quad \textrm{$\prob$-a.s.\ on $A_c\cap \{ \rho_1 < t\}$}\, .
\end{equation*}
Therefore,
\begin{equation*}
\int_0^t \phi_u \ud S_u > \phi S_t \quad \textrm{$\prob$-a.s.\ on $A_c\cap \{ \rho < t\}$}\, ,
\end{equation*}
and since the process $\big(\int_0^t \phi_u \ud S_u\big)_{t \in [0,T]}$ is continuous and $(\phi_t S_t)_{t \in [0,T]}$ is c\`agl\`ad, we find that
\begin{equation*}
\prob\bigg( \int_0^t \phi_u \ud S_u > \phi_t S_t \textrm{ for all $t \in (\rho,T]$}\bigg) \geqslant \prob(A_c)>0\, ,
\end{equation*}
so we have established \eqref{eq:prob}. It remains to observe that
\begin{equation*}
\int_0^t \phi_u \ud S_u = 0=\phi_t S_t\, ,
\end{equation*}
$\prob$-a.s.\ on $\Omega \setminus A_c$ and when $t \leqslant \rho$, so also \eqref{eq:as} follows.
\end{proof}

\section*{Acknowledgements}

The research of J.\ Lukkarinen has been partially supported by the Academy of Finland via the Centre of Excellence in Analysis and Dynamics Research (project 271983) and from an Academy Project (project 258302). M. S. Pakkanen acknowledges partial support 
from CREATES (DNRF78), funded by the Danish National Research Foundation, 
from the Aarhus University Research Foundation (project ``Stochastic and Econometric Analysis of Commodity Markets"), and
from the Academy of Finland (project 258042).

\end{document}